\numberwithin{equation}{section}
\newtheorem{theorem}{Theorem}%
\newtheorem{proposition}[theorem]{Proposition}
\newtheorem{corollary}[theorem]{Corollary}
\newtheorem{example}[theorem]{Example}%
\newtheorem{fact}[theorem]{Fact}
\newtheorem{definition}[theorem]{Definition}
\newcommand{\F}{\mathbb{F}}
\newcommand{\N}{\mathbb{N}}
\newcommand{\supp}{\textnormal{supp}}
\newcommand{\ev}{\textnormal{ev}}
\newcommand{\rmv}[1]{}
\newcommand{\glm}[1]{{\color{magenta}    #1}}
\begin{document}

\title[Fractional decoding of algebraic geometry codes over extension fields]{Fractional decoding of algebraic geometry codes over extension fields
}

\author[E. Camps-Moreno]{Eduardo Camps-Moreno}
\address[Eduardo Camps-Moreno]{Department of Mathematics\\ Virginia Tech\\ Blacksburg, VA USA}
\email{e.camps@vt.edu}

\author[G. L. Matthews]{Gretchen L. Matthews}
\address[Gretchen L. Matthews]{Department of Mathematics\\ Virginia Tech\\ Blacksburg, VA USA}
\email{gmatthews@vt.edu}

\author[W. Santos]{Welington Santos}
\address[Welington Santos]{Department of Mathematics, Statistics and Computer Science University of Wisconsin-Stout}
\email{santosw@uwstout.edu}

\begin{abstract} 
In this paper, we study algebraic geometry codes from curves over $\mathbb{F}_{q^\ell}$ through their virtual projections which are algebraic geometric codes over $\mathbb{F}_q$. We 
use the virtual projections to 
provide fractional decoding algorithms for the codes over $\mathbb{F}_{q^\ell}$. Fractional decoding seeks to perform error correction using a smaller fraction of $\F_q$-symbols than a typical decoding algorithm.
In one instance, the bound on the number of correctable errors differs from the usual lower bound by the degree of a pole divisor of an annihilator function. In another, we view the virtual projections as interleaved codes to, with high probability, correct more errors than anticipated. 
\end{abstract}



\maketitle

\section{Introduction} \label{s:intro}
The fractional decoding problem is motivated by  distributed systems in which there are limitations on the disk operation and on the amount of information transmitted for the purpose of decoding. Sometimes thought of as error correction with partial information, fractional decoding considers codes defined over an extension field and algorithms for error correction that use fewer symbols from the base field than is typical, thus operating using a restricted amount of information in the decoding process. Tamo, Ye, and Barg \cite{BarM} introduced the concept, inspired by regenerating codes and codes for erasure recovery which focus on the repair bandwidth which measures the proportion of a received word used to recovery an erasure (see, for instance, \cite{Dimakis, GW, YeBarg}). 
	
 In \cite{Fractional_ISIT2019}, Santos provided a connection between fractional decoding of Reed-Solomon codes, which can be considered as codes from
 the projective line, a curve of genus $0$, and collaborative decoding of interleaved Reed-Solomon codes. Later in \cite{Fractional_ISIT2021,Fractional_rHerm}, the authors present the first fractional decoding algorithm for codes from the Hermitian curve. The $r$-Hermitian codes are constant field extensions of Hermitian codes with a bound on the $y$-degree of the rational functions in the Riemann-Roch space used to define codewords.

In this paper, we present a fractional decoding approach for a large family of algebraic geometry codes.  The codes may be considered as evaluation codes over the field $\F_{q^l}$ of cardinality $q^l$ with evaluation points whose coordinates that lie in a subfield. Taking evaluation points to have coordinates in a proper subfield (such as $\F_q$ where $l>1$) to define a family of codes for particular purpose is an idea utilized by Guruswami and Xing \cite{Guruswam} and  Gao, Yue, Huang, and Zhang \cite{GYHZ} among others (including \cite{mvG}). 
To be more precise, we focus on algebraic geometry codes defined as follows. Consider $n$ distinct affine points $ P_1, \dots, P_n$  on a smooth, projective curve $\mathcal X$ with coordinates in $\F_q$ and a divisor $G$ of degree less than $n$ on $X$ whose support consists of only $\F_q$-rational points and contains none of the points $P_1, \dots, P_n$. We are interested in the $q^l$-ary  algebraic geometry code $C_l(G, D):=\ev(\mathcal L_l(G)) \subseteq \F_{q^l}^n$ where
$$
\begin{array}{lccc}
\ev: & \F_{q^l}(\mathcal X) & \rightarrow & \F_{q^l}^n \\
& f & \mapsto & (f(P_1), \dots, f(P_n)) 
\end{array}
$$
and $D=P_1+\dots+P_n$.
Noting that $C_l(G, D)$ has length $n$, dimension $\ell(G)$, which is the dimension of the Riemann-Roch space $\mathcal L(G)$, and minimum distance $d$ at least $n- \deg G$, we know that up to $e$  errors may be corrected in a received word $w \in \F_{q^l}^n$, where $e \leq \frac{d-1}{2}$. Hence, viewing $\F_{q^l}$ as a degree $l$ extension of $\F_q$, we may see this as correcting $e$ errors using $ln$ symbols of $\F_q$. 

In this work, we use partitions of the set of evaluation points $P_1, \dots, P_n$ to demonstrate that up to $\frac{d'-1}{2}$ errors may be corrected using only $mn$ symbols of the base field $\F_q$, where $m<l$ is number of parts of a partition and $d'$ is specified by pole divisors of certain rational functions on $\mathcal X$. This allows for error correction using algebraic geometry codes in applications where limiting network traffic is beneficial, such as distributed storage.
We emphasize two key advantages to this approach. First, while quite general, these techniques apply particularly nicely to codes constructed from Kummer extensions or Castle curves and a constant field extension of the Riemann-Roch space of a divisor on the curve. Second, the perspective may be combined with advances in traditional decoding algorithms for algebraic geometry codes, providing immediate adapation to the distributed context considered in this work. 

This paper is organized as follows. Section \ref{S:virtual_projection} presents the main tools to be used in fractional decoding of codes from curves. The codes we consider are described in Subection \ref{S:constant_ext} where we study their virtual projections. These virtual projections may be conveniently captured as an interleaved code, as seen in Subsection \ref{S:interleaved_vp}. Particular instances for important families of codes are given in Subsection \ref{S:vp_Kummer}. Fractional decoding algorithms are presented in Section \ref{S:decoding}. The first, in Subsection \ref{s:directly}, makes use of the virtual projections directly. The second is via interleaved codes found in Subsections \ref{S:interleaved_vp} and \ref{s:decoding_interleaved}. These frameworks can make use of advances in decoding of the component or interleaved codes.

\section{Virtual projections of algebraic geometry codes} \label{S:virtual_projection}

In this section, we will consider convenient ways to represent algebraic geometry codes defined over extention fields. In \cite{Guruswam}, Guruswami and Xing considered algebraic-geometric codes over a field $\F_{q^l}$ whose evaluation points belong to a subfield $\F_q$ and provided a list decoding algorithm for those codes. Here, we consider properties of a similar family of codes, in preparation for  demonstrating that  fractional decoding can be applied to them. 

\subsection{Notation} \label{s:notation}
We use the standard notation from coding theory and the algebraic curves over finite fields; see \cite{Stichtenoth} for instance. Given a genus $g$ curve $\mathcal X$ over a finite field $\F$, the associated field of rational functions is denoted $\F(\mathcal X)$. For a rational function $f \in \F(\mathcal X)$ and $\F$-rational point $P$ on $\mathcal X$, the valuation of $f$ at $P$ is denoted $v_P(f)$. The divisor of a nonzero rational function  $f$ is $(f)=(f)_0 - (f)_{\infty}$ where $(f)_0=\sum_{{\footnotesize{ \{
P \in \mathcal X (\overline{\F}) : v_P(f)>0 \} }}}
v_P(f) P$ and $(f)_{\infty}=\sum_{{\footnotesize{
\{
P \in \mathcal X (\overline{\F}): v_P(f)<0 \}}}}
v_P(f) P$ are the zero and pole divisors of $f$. For a divisor  $D$ on $\mathcal X$, the Riemann-Roch space of $G$ is vector space of functions 
$\mathcal L(G):= \left\{ f \in \F(\mathcal X) : (f) + G \geq 0 \right\}$ whose dimension over $\F$ is denoted $\ell(G)$.  If $\deg G \geq 2g-1$, then $\ell(G)=\deg G + 1 -g$, and $\Omega(G)$ is the vector space of differentials with divisors at least $G$, together with the zero differential. We use $\N$ to denote the set of nonnegative integers. Given a positive integer $n$, $[n]:=\left\{ 1, \dots, n\right\}$. For integers $i, j$, 
$\delta_{i,j}=1$ if and only if $i=j$ and $\delta_{i,j}=0$ otherwise. The set of all $m \times n$ matrices over $\F$ is denoted $\F^{m \times n}$. Given a matrix $A=(a_{ij}) \in \F^{m \times n}$, we let $D_{Row_iA}$ denote the diagonal matrix whose diagonal is given by the $i^{th}$ row of $A$ and $A\mid_{I} \in \F^{m \times \mid I \mid}$ denote the submatrix of $A$ whose columns are the columns of $A$ indexed by $I \subseteq [n]$.

An $[n,k,d]$ code $C$ over a finite field $\F$ is a $k$-dimensional $\F$-subspace of $\F^n$ in which the minimum weight of a nonzero vector is $d$. Here, $n$ is the length of $C$, $k$ is the dimension of $C$, $d$ is the minimum distance of $C$, and the weight of a word $w \in \F^n$, denoted $wt(w)$ is its number of nonzero coordinates. Such a code can be described by a generator matrix, which is a matrix $G \in \F^{k \times n}$ whose rows form a basis for $C$. We say that a subset of $k$ coordinate positions $I \subseteq [n]$ is an information set if and only if the $k \times k$ submatrix of $G$ formed by columns indexed by $I$ is nonsingular.

We will consider algebraic geometry codes $C_l(G, D)$ as described in Section \ref{s:intro}. The points in the support of $D$ are referred to as the evaluation points of the code. We typically take them to be affine points on the curve, in which case an evaluation point may be expressed as $(a_1, \dots, a_M)$ where $\mathcal X$ is given by equations involving variables $x_1, \dots, x_M$.
 In the case where $D$ is the sum of all $\F_q$-rational points not in the support of $G$ or is clear from the context, we write $C_l(G)$ rather than $C_l(G, D)$. If the support of $G$ consists of a single point, the codes $C_l(G, D)$ and $C(G, D)$ are called one-point codes. We say that a collection of points $\left\{ P_{i_1}, \dots, P_{i_k} \right\} \subseteq \left\{ P_1, \dots, P_n \right\}$ is an information set to mean that the positions $\left\{ {i_1}, \dots, {i_k} \right\}$  corresponding to these points is an information set.

\subsection{Codes from constant field extensions} \label{S:constant_ext}

Let $\left\{ \zeta_{1},\ldots,\zeta_{l} \right\}$ be a basis of $F:=\F_{q^l}$ over $B:=\F_q$, and let $\left\{ \nu_{1},\ldots,\nu_{l} \right\}$ be its dual basis, meaning $tr(\zeta_{s}\nu_{j})=\delta_{s,j}$ for all $s, j \in [l]$. Then for all $\beta \in F$,
\[\beta=\sum_{s=1}^{l}tr(\zeta_{s}\beta)\nu_{s}.\]
In other words, any element $\beta$ in $F$ can be calculated from its $l$ projections $tr(\zeta_{s}\beta)$, $s \in [l]$, on $B$.

Consider a smooth projective curve $\mathcal X$ over a field $\F_{q}$ and a divisor $G$ whose support contains only $\F_q$-rational points. 
At times, we may wish consider  the Riemann-Roch space of $G$ on $\mathcal X$ considered over the extension  $\F_{q^l}$, denoted $\mathcal{L}_{l}(G)$. It is well-known that
$$
\mathcal{L}_{l}(G)=\mathcal{L}(G)\otimes\mathbb{F}_{q^{l}}.
$$

\begin{definition}
	Keep the notation above and assume that $\mathcal B=\left\{ h_1, \dots, h_k \right\}$ is a basis for
 $\mathcal L (G)$. 
 For $s \in [l]$,  the $s$-projection of the function $f(x_1,\ldots,x_M)=\sum_{i=1}^k a_i h_i(x_1,\ldots,x_M)\in\mathcal{L}_{l}(G)$ to $\mathcal{L}(G)$ with respect to $\mathcal B$ is defined to be
$$f_s(x_1,\ldots,x_M)=\sum_{i=1}^{k}tr(\zeta_{s}a_{i})h_i(x_1,\ldots,x_M).$$
Typically, a basis for $\mathcal L (G)$ is fixed, and we refer to $f_s$ as the $s$-projection of $f$ to $\mathcal L(G)$, omitting mention of the particular choice of basis.
\end{definition}

Note that $f \in  \mathcal{L}_{l}(G) \subseteq \F_{q^l}(\mathcal X)$ while $f_s 
\in \mathcal{L}(G) \subseteq \F_{q}(\mathcal X)$. Furthermore,  $f(x_1,\ldots,x_M)$ is fully determined by $\left\lbrace f_{s}(x_1,\ldots,x_M): s \in [l]\right\rbrace$, since 
$$ 
\begin{array}{lll}
f(x_1,\ldots,x_M) &= \sum_{i=1}^k a_i h_i(x_1,\ldots,x_M) \\ \ \\ &= \sum_{i=1}^k \left[ \sum_{s=1}^l tr(\zeta_{s}a_{i})\nu_s  \right] h_i(x_1,\ldots,x_M) \\ \ \\
&= \sum_{s=1}^l \left[ \sum_{i=1}^k tr(\zeta_{s}a_{i}) h_i(x_1,\ldots,x_M) \right]\nu_s \\ \ \\ &= \sum_{s=1}^l f_s(x_1,\ldots,x_M)\nu_s. \end{array}$$
This observation will be useful in the fractional decoding of certain algebraic geometry codes. 
It is also straightforward to verify that the $s$-projection of a $\F_q$-linear combination of functions behaves nicely, as mentioned in the next observation. 

\begin{fact} \label{f:s_prop}
For all $f, h \in \mathcal L_l(G)$ and $a,b \in \F_{q}$,
    $$
(af+bh)_s=af_s+bh_s.
$$
\end{fact}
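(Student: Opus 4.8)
The plan is to unwind the definition of the $s$-projection in terms of a fixed basis $\mathcal B = \{h_1,\dots,h_k\}$ of $\mathcal L(G)$ and reduce everything to the $\F_q$-linearity of the trace map $tr\colon \F_{q^l}\to\F_q$. First I would write $f = \sum_{i=1}^k a_i h_i$ and $h = \sum_{i=1}^k b_i h_i$ with $a_i, b_i \in \F_{q^l}$, so that $af + bh = \sum_{i=1}^k (a a_i + b b_i) h_i$ is the expansion of $af+bh$ in the basis $\mathcal B$ (here $a,b\in\F_q\subseteq\F_{q^l}$, so the coefficients $a a_i + b b_i$ again lie in $\F_{q^l}$, and the expansion is unique because $\mathcal B$ is an $\F_{q^l}$-basis of $\mathcal L_l(G)$ by the identity $\mathcal L_l(G) = \mathcal L(G)\otimes\F_{q^l}$).

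Next I would apply the definition of the $s$-projection to $af+bh$ using this expansion:
$$
(af+bh)_s = \sum_{i=1}^k tr\!\left(\zeta_s (a a_i + b b_i)\right) h_i .
$$
The key step is then to invoke that $tr$ is $\F_q$-linear: since $a, b \in \F_q$, we have $tr(\zeta_s(a a_i + b b_i)) = tr(a\,\zeta_s a_i + b\,\zeta_s b_i) = a\,tr(\zeta_s a_i) + b\,tr(\zeta_s b_i)$. Substituting this back and splitting the sum gives
$$
(af+bh)_s = a\sum_{i=1}^k tr(\zeta_s a_i) h_i + b\sum_{i=1}^k tr(\zeta_s b_i) h_i = a f_s + b h_s,
$$
which is exactly the claimed identity.

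There is no real obstacle here; the only point that needs a word of care is that the coefficient vectors $(a_i)$ and $(b_i)$ are well defined, i.e.\ that expansion in $\mathcal B$ over $\F_{q^l}$ is unique, which follows from $\mathcal B$ being an $\F_{q^l}$-basis of $\mathcal L_l(G)=\mathcal L(G)\otimes\F_{q^l}$, and that $a,b$ being in the base field $\F_q$ is precisely what allows them to be pulled through the trace. Everything else is a direct computation.
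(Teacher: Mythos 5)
Your proof is correct and is exactly the direct verification that the paper leaves implicit (the paper states the fact is ``straightforward to verify'' and gives no proof). You correctly identify the two load-bearing points: uniqueness of the expansion in the fixed basis $\mathcal B$, which follows from $\mathcal L_l(G)=\mathcal L(G)\otimes\F_{q^l}$, and the $\F_q$-linearity of the trace, which is precisely what lets the scalars $a,b\in\F_q$ pass through $tr(\zeta_s\,\cdot\,)$. Nothing is missing.
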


To set the stage for fractional decoding, fix a partition $A_1 \dot{\cup} \cdots \dot{\cup} A_m \subseteq  \left\{ P_1, \dots, P_n\right\}$ where $m<l$. Consider an annihilator polynomial $p_t(x_1, \dots, x_M) \in \F_{q}[x_1, \dots, x_M]$ of the set $A_{t}$, $t \in [m]$, meaning $p(x_1, \dots, x_M)=0$ for all $(x_1, \dots, x_M) \in A_t$. 	

\begin{definition}\label{Tr2}
	Given  $f\in\mathcal{L}_{l}(G)$, $A_{1} \dot{\cup} \cdots \dot{\cup} A_{m}\subseteq \left\{ P_1, \dots, P_n \right\}$, and $t\in [m]$, define the function
 $$
\begin{array}{ll}
T_{t}(f)(x_1, \dots, x_M)=&f_{l-m+t}(x_1, \dots, x_M)(p_{t}(x_1, \dots, x_M))^{l-m} \\ \ \\ &+ \sum_{s=1}^{l-m}f_{s}(x_1, \dots, x_M)(p_{t}(x_1, \dots, x_M))^{s-1}\end{array}$$
and the $t$-virtual projection of $\mathcal{C}_{l}(G,D)$  to be
	$$\mathcal{VP}_{t}(G,D)=\left\lbrace (T_{t}(f)(P_1),\ldots,T_{t}(f)(P_n)):f\in\mathcal{L}_{l}(G)\right\rbrace.$$
If $D$ is the sum of all $\F_q$-rational points not in the support of $G$ or is clear from the context, we write $\mathcal{VP}_{t}(G)$ rather than $\mathcal{VP}_{t}(G,D)$.
\end{definition}

The next result demonstrates how the virtual projection of an algebraic geometry code may be seen as an algebraic geometry code itself. 

\begin{proposition} \label{P:subcode}
Given divisors $G$ and $D$ whose supports consist only of $\F_q$-rational points,
    the $t$-virtual projection $\mathcal{VP}_{t}(G,D)$ of $\mathcal{C}_{l}(G,D)$ is a subcode of $\F_q^n$. In particular, 
    $$
\mathcal{VP}_{t}(G,D) = \ev \left( T_t \left( \mathcal L_l(G) \right) \right) \subseteq 
C \left(D, G - (l-m) \left(p_t\right)_{\infty} \right).
$$    
\end{proposition}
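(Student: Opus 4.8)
The plan is to show two things: first, that each function $T_t(f)$ lies in the Riemann–Roch space $\mathcal{L}\bigl(G - (l-m)(p_t)_\infty\bigr)$ over $\F_q$, so that evaluating it produces a codeword of $C\bigl(D, G-(l-m)(p_t)_\infty\bigr)$; and second, that $T_t$ is $\F_q$-linear on $\mathcal{L}_l(G)$, so that the image $\ev(T_t(\mathcal{L}_l(G)))$ is indeed an $\F_q$-subspace of $\F_q^n$, i.e.\ a linear code. The second point is essentially immediate from Fact~\ref{f:s_prop}: each $f_s$ depends $\F_q$-linearly on $f$ (for $f = \sum a_i h_i$, the coefficient $tr(\zeta_s a_i)$ is $\F_q$-linear in the $a_i$), and multiplying by the fixed polynomial $p_t^{s-1}$ or $p_t^{l-m}$ and summing preserves $\F_q$-linearity; the coordinates of $T_t(f)$ lie in $\F_q$ because each $f_s \in \mathcal{L}(G) \subseteq \F_q(\mathcal X)$ and $p_t \in \F_q[x_1,\dots,x_M]$, and $P_1,\dots,P_n$ have $\F_q$-rational coordinates.

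The substantive step is the pole-order bound. I would argue term by term. For $s \in \{1,\dots,l-m\}$, the function $f_s \in \mathcal{L}(G)$ satisfies $(f_s) + G \geq 0$, and $p_t^{s-1}$ is a polynomial in the affine coordinates, hence regular away from the points at infinity; its pole divisor is at most $(s-1)(p_t)_\infty$. Therefore
\[
\bigl(f_s \, p_t^{s-1}\bigr) + G + (s-1)(p_t)_\infty \geq 0,
\]
and since $s-1 \leq l-m-1 < l-m$, this gives $f_s\, p_t^{s-1} \in \mathcal{L}\bigl(G + (l-m)(p_t)_\infty\bigr)$. Wait — the target divisor in the statement is $G - (l-m)(p_t)_\infty$, so I actually need the \emph{opposite} inequality on pole orders; this is where the annihilator hypothesis enters. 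The point is that $p_t$ \emph{vanishes} on $A_t$, and more importantly one must use that $(p_t)_0 \geq$ (some divisor supported on the $A_t$-points) while controlling $(p_t)_\infty$. Let me restate the key computation: writing $(p_t) = (p_t)_0 - (p_t)_\infty$, for the term $f_{l-m+t}\, p_t^{l-m}$ we have
\[
\bigl(f_{l-m+t}\, p_t^{l-m}\bigr) + G - (l-m)(p_t)_\infty
= (f_{l-m+t}) + G + (l-m)(p_t)_0 - (l-m)(p_t)_\infty - (l-m)(p_t)_\infty + (l-m)(p_t)_\infty,
\]
so one needs $(f_{l-m+t}) + G + (l-m)(p_t)_0 \geq 0$, which holds since $(f_{l-m+t}) + G \geq 0$ and $(p_t)_0 \geq 0$. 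For the lower-degree terms $f_s\, p_t^{s-1}$ with $s \leq l-m$, one has $(f_s\, p_t^{s-1}) + G - (l-m)(p_t)_\infty \geq (s-1)(p_t)_0 - (l-m)(p_t)_\infty + (s-1)(p_t)_0 \cdots$ — more carefully, $(f_s p_t^{s-1}) + G = (f_s)+G + (s-1)(p_t)_0 - (s-1)(p_t)_\infty \geq -(s-1)(p_t)_\infty \geq -(l-m)(p_t)_\infty$ since $s-1 < l-m$. Hence every summand of $T_t(f)$ lies in $\mathcal{L}\bigl(G - (l-m)(p_t)_\infty\bigr)$ once we observe that $-(s-1)(p_t)_\infty \geq -(l-m)(p_t)_\infty$, i.e.\ $(f_s p_t^{s-1}) + G - (l-m)(p_t)_\infty \geq ((l-m)-(s-1))(p_t)_0 \geq 0$. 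Summing over $s$ and including the $t$-term keeps us in this space.

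I would then conclude: $T_t(f) \in \mathcal{L}\bigl(G - (l-m)(p_t)_\infty\bigr)$ for every $f \in \mathcal{L}_l(G)$, and this space is defined over $\F_q$ since $G$, $(p_t)_\infty$ have $\F_q$-rational support and $p_t$ has $\F_q$ coefficients; therefore $\ev(T_t(f)) \in C\bigl(D, G-(l-m)(p_t)_\infty\bigr)$, and taking the span over all $f$ gives the asserted inclusion of codes. The main obstacle I anticipate is bookkeeping with the pole divisors: one must be careful that $(p_t)_\infty$ as a divisor on $\mathcal X$ accounts for \emph{all} poles of $p_t$ at infinity with correct multiplicities (including the possibility that different points at infinity contribute differently), and that the degree inequality $\deg(G - (l-m)(p_t)_\infty) < n$ — needed for this to be a genuine "algebraic geometry code" with the usual parameters — may require a side remark, though it is not strictly needed for the containment itself. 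A secondary subtlety is making the $\F_q$-rationality of the resulting Riemann–Roch space explicit, which follows from Galois descent since every divisor involved is Galois-stable.
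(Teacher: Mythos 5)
Your overall strategy — establish $\F_q$-linearity via Fact~\ref{f:s_prop}, then bound pole orders term by term to place $T_t(f)$ in the right Riemann–Roch space — is exactly the paper's approach. The linearity half is fine. The trouble is entirely in the pole-order half, and it stems from a sign convention that you did not reconcile.

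Your \emph{first} computation is the correct one. With the standard convention that $(p_t)_\infty$ is an effective divisor and $(p_t) = (p_t)_0 - (p_t)_\infty$, the estimate
\[
(f_s\, p_t^{s-1}) + G \;=\; (f_s)+G + (s-1)(p_t)_0 - (s-1)(p_t)_\infty \;\geq\; -(l-m)(p_t)_\infty
\]
for $s\in[l-m]$, together with the analogous estimate for $f_{l-m+t}\,p_t^{l-m}$, shows $T_t(f)\in\mathcal{L}\bigl(G + (l-m)(p_t)_\infty\bigr)$: multiplying by powers of $p_t$ can only \emph{add} poles, so the target space must be \emph{larger} than $\mathcal{L}(G)$. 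That is the content of the proposition. The minus sign in the paper's statement is an artifact of the paper's Section~\ref{s:notation} convention, under which $(f)_\infty := \sum_{v_P(f)<0} v_P(f)\,P$ already carries negative coefficients (so that $G - (l-m)(p_t)_\infty$ there is the same enlarged divisor $G + (l-m)\sum_{v_P(p_t)<0}|v_P(p_t)|\,P$); one can see this from the last line of the paper's own proof, which writes the answer as $\mathcal{L}\bigl(G - (l-m)\sum_{P\in\supp(p_t)_\infty} v_P(p_t)\,P\bigr)$ with $v_P(p_t)<0$ at each such $P$.

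The ``correction'' you then attempted is where the proposal breaks. Having misread the sign, you set out to show $T_t(f)\in\mathcal{L}\bigl(G - (l-m)(p_t)_\infty\bigr)$ with $(p_t)_\infty\geq 0$, which is a \emph{smaller} space than $\mathcal{L}(G)$ and is simply not where $T_t(f)$ lives. The algebra then has to fail somewhere, and it does: in your expansion of $\bigl(f_{l-m+t}\,p_t^{l-m}\bigr) + G - (l-m)(p_t)_\infty$, the right-hand side should carry the term $-\,2(l-m)(p_t)_\infty$ (one copy from $(p_t^{l-m})$ and one from the $-(l-m)(p_t)_\infty$ you subtracted), but your displayed line collapses to a single $-(l-m)(p_t)_\infty$ and then silently drops even that when you conclude ``one needs $(f_{l-m+t})+G+(l-m)(p_t)_0\geq 0$.'' Likewise the final claim $(f_s p_t^{s-1}) + G - (l-m)(p_t)_\infty \geq \bigl((l-m)-(s-1)\bigr)(p_t)_0 \geq 0$ is false: the left side is bounded below by $(s-1)(p_t)_0 - \bigl((s-1)+(l-m)\bigr)(p_t)_\infty$, which is negative at infinity whenever $(p_t)_\infty>0$. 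Finally, the remark that ``this is where the annihilator hypothesis enters'' is a red herring for this proposition: the vanishing of $p_t$ on $A_t$ plays no role here (it is used later, in Theorem~\ref{T:recover1}); all that matters is that $p_t$ is a polynomial in the affine coordinates, hence has nonnegative valuation at every finite $\F_q$-rational point.

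In short: your original term-by-term bound already proves the proposition. Discard the ``corrected'' computation, and add a one-line remark reconciling the sign convention for $(p_t)_\infty$ with the paper's Section~\ref{s:notation}.
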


\begin{proof}
 According to Fact \ref{f:s_prop}, for all $f, h \in \mathcal L_l(G)$ and $a,b \in \F_{q}$, $$T_t \circ af + T_t \circ bh = T_t \circ \left( af+bh \right)$$ 
and 
$$
a \cdot \ev \left( T_t(f) \right)+b \cdot \ev \left( T_t(h) \right)= \ev \left( T_t(af+bh) \right).
$$ Now the result follows from the fact that 
$$\mathcal{VP}_{t}(G) = \ev(T_t(\mathcal L_l(G))) \subseteq
\F_q^n.$$
Indeed, note that for any function $f \in \mathcal L_l(G)$ and $t \in [m]$,
\begin{equation} \label{E:vP}
\begin{array}{lcl}
v_P(T_t(f))  &\geq & \min \left\{ 
v_P \left( f_i p_t^i \right), 
  v_P \left( f_{l-m+t} p_t^{l-m}   \right):  
i \in [l-m]  
\right\} \\ \ \\
&\geq&    \min \left\{ \begin{array}{l}
 v_P \left( f_i\right) + i v_P\left(p_t \right), \\
  v_P \left( f_{l-m+t}\right) + (l-m)  v_P \left(p_t \right)\end{array}:  
i \in [l-m] 
\right\}
\\ \ \\
&\geq & -v_P \left(G \right)+   \min \left\{    i v_P\left(p_t \right)  : i \in [l-m] 
\right\}
\\ \ \\
&\geq & \begin{cases}
-v_P(G) & \textnormal{if } v_P\left(p_t \right) \geq 0 \\
-v_P(G)+(l-m)v_P\left(p_t \right) & \textnormal{if } v_P\left(p_t \right) < 0.
\end{cases}
\end{array}
\end{equation}
Hence, $$T_t(f) \in \mathcal L\left( G - (l-m) \sum_{P \in \supp (p_t)_{\infty}} v_P\left(p_t \right) P  \right).$$
\qed 
\end{proof}

Proposition \ref{P:subcode} demonstrates the influence of the partition and the annihilator functions on the virtual projection.
More information about the annihilator functions will allow us to better bound $v_P(T_t(f))$, resulting in a better understanding of the space that contains $\left\{T_t(f): f \in \mathcal L_l(G) \right\}$ for each $t \in [m]$.
 We will see that
a key idea to achieving fractional decoding of $C_l(G)$ is decoding $\mathcal{VP}_{t}(G)$, $t \in [m]$.

\begin{theorem}\label{T:recover1}
 Suppose $I \subseteq A_{1} \dot{\cup} \cdots \dot{\cup} A_{m}\subseteq \left\{ P_1, \dots, P_n \right\}$,
  for some $m \in \N$ and some information set $I$ for the code $C_l(G)$ with generator matrix $A=\left( h_i(P_j)\right)$. Then the function
 $f \in \mathcal{L}_{l}(G)$ depends only on $\left\lbrace T_{t}(f)(P): t \in [m], P \in I \right\rbrace$ and $A\mid_I$.  
\end{theorem}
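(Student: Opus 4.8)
The plan is to reconstruct $f$ by peeling off its projections $f_1,\dots,f_l\in\mathcal L(G)$ one at a time, using the annihilators $p_t$ to isolate the ``lowest'' projection at each stage. Throughout I use the following reduction: since $A=(h_i(P_j))$ has entries in $\F_q$ and $I$ is an information set for $C_l(G)$, the $k\times k$ matrix $A\mid_I$ is invertible over $\F_{q^l}$, hence over $\F_q$; so any $g=\sum_i b_i h_i\in\mathcal L_l(G)$ satisfies $(b_i)_i=(g(P))_{P\in I}\,(A\mid_I)^{-1}$ and is therefore recovered from $A\mid_I$ and $\{g(P):P\in I\}$. Applying this to each $f_s\in\mathcal L(G)\subseteq\mathcal L_l(G)$ and recalling $f=\sum_{s=1}^{l}f_s\nu_s$, it suffices to determine $\{f_s(P):s\in[l],\ P\in I\}$ from the data $\{T_t(f)(P):t\in[m],\ P\in I\}$.

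The crucial observation is an ``annihilator identity''. Fix $P\in I$. Because $A_{1}\,\dot{\cup}\,\cdots\,\dot{\cup}\,A_{m}$ is a partition, $P$ lies in exactly one part, say $P\in A_{t_0}$, and $p_{t_0}(P)=0$. Substituting into Definition~\ref{Tr2}: since $l-m\ge 1$ the term $f_{l-m+t_0}(P)\,p_{t_0}(P)^{\,l-m}$ vanishes, and in $\sum_{s=1}^{l-m}f_s(P)\,p_{t_0}(P)^{\,s-1}$ every summand with $s\ge 2$ vanishes while the $s=1$ summand equals $f_1(P)$; hence $T_{t_0}(f)(P)=f_1(P)$. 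Since each $P\in I$ belongs to some part, the data contain $f_1(P)$ for all $P\in I$, so by the first paragraph $f_1$ is determined as an element of $\mathcal L(G)$.

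One now iterates. With $f_1$ known, for each $t\in[m]$ the function $T_t(f)-f_1=\sum_{s=2}^{l-m}f_s\,p_t^{\,s-1}+f_{l-m+t}\,p_t^{\,l-m}$ is divisible by $p_t$, and the quotient $\big(T_t(f)-f_1\big)/p_t=\sum_{s=2}^{l-m}f_s\,p_t^{\,s-2}+f_{l-m+t}\,p_t^{\,l-m-1}$ is again of the form in Definition~\ref{Tr2}, but with $l$ replaced by $l-1$ and the projections shifted to $f_2,\dots,f_l$; applying the annihilator identity to it recovers $f_2$, then $f_3$, and so on, yielding $f_1,\dots,f_{l-m}$, after which each $f_{l-m+t}$ is read off from $\big(T_t(f)-\sum_{s=1}^{l-m}f_s\,p_t^{\,s-1}\big)/p_t^{\,l-m}$. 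The step I expect to demand the most care is this iteration: dividing by $p_t$ is only valid away from the zeros of $p_t$ — in particular away from $A_t$ — so at each stage one must verify that the resulting quotient is still known at enough points, and that Proposition~\ref{P:subcode}, which locates $T_t(f)$ in a Riemann--Roch space, leaves enough room to pin down the next projection. Controlling exactly how the evaluations survive each division, using the way $I$ sits inside and distributes among the parts $A_{1}\,\dot{\cup}\,\cdots\,\dot{\cup}\,A_{m}$, is the heart of the argument.
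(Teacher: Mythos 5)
Your proof proposal follows the same route as the paper's own proof: determine $f_1$ from the annihilator identity $T_{t(P)}(f)(P)=f_1(P)$ for the unique $t(P)$ with $P\in A_{t(P)}$, interpolate using the information set, then iteratively divide out $p_t$ to reach $f_2, f_3, \dots$. The concern you flag as ``the heart of the argument'' --- that at a point $P\in A_t$ the quotient $\bigl(T_t(f)(P)-\sum_{s<i}f_s(P)p_t(P)^{s-1}\bigr)/p_t(P)^{i-1}$ is an indeterminate $0/0$, so the downloaded pointwise values do not directly yield $T_t^{(i)}(f)(P)=f_i(P)$ at exactly the points where the annihilator identity applies --- is precisely the step the paper's proof glosses over as well. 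The paper asserts that ``$T_t^{(i)}(f)$ can be determined from $T_t(f)$ and $\{f_s:s\in[i-1]\}$,'' which is true at the level of rational functions, but the input to the theorem is only the finite set of values $\{T_t(f)(P):P\in I\}$, and those values do not determine the quotient at its removable singularities.

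In fact, with the hypotheses literally as stated this gap cannot be closed. By Fact \ref{f:s_prop} the map $\mathcal L_l(G)\to\F_q^{m|I|}$, $f\mapsto\bigl(T_t(f)(P)\bigr)_{t\in[m],\,P\in I}$, is $\F_q$-linear; the domain has $\F_q$-dimension $kl$, while an information set has $|I|=k$, so the codomain has dimension $mk<kl$ and the map cannot be injective. Concretely, for $m=1$ one has $T_1(f)(P)=f_1(P)$ for every $P\in A_1\supseteq I$, so the data determine only $f_1$ and say nothing about $f_2,\dots,f_l$. The surrounding discussion (the array in (\ref{E:T_array})) and Corollary \ref{C:recover2} indicate the intended input is the larger array $\{T_t(f)(P):t\in[m],\,P\in A_1\cup\cdots\cup A_m\}$; there the values at $P\notin A_t$ are genuinely usable for the quotient, and the iteration can proceed provided the complement of each $A_t$ in $A_1\cup\cdots\cup A_m$ still pins down the shifted function. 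So your instinct to scrutinize how the evaluations survive each division is correct: a complete argument must either take the full array as input or impose an additional condition on the partition, and neither your proposal nor the paper's proof supplies this.
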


\begin{proof} Given $A_{1} \dot{\cup} \cdots \dot{\cup} A_{m}\subseteq \left\{ P_1, \dots, P_n \right\}$ and  $\left\lbrace T_{t}(f)(x_1,\ldots,x_M): t \in [m]\right\rbrace$, we aim to determine $f$. 
For $i\in [l-m]$ and $t \in [m]$, let 
$$ T_{t}^{(i)}(f)(x_1, \dots, x_M)=
 \frac{T_{t}(f)(x_1, \dots, x_M)-
\sum_{s=1}^{i-1}f_{s}(x_1, \dots, x_M)p_t(x_1, \dots, x_M)^{s-1}}{p_{t}(x_1, \dots, x_M)^{i-1}}.$$
First, we will determine $f_1$. Notice that $T_t^{(1)}(f)=T_t(f)$ and 
$$f_1(P)=T_t(f)(P)$$ for all $P \in A_t$. Since
$f_1= \sum_{j=1}^{k} a_{1j} h_j$ for some $a_{1j} \in \F_q$ and $I \subseteq  A_1 \cup \dots \cup A_m$, we may use the values $h_j(P)$ for $j \in [k]$  and $P \in I$  to set up a system of equations
$$
\sum_{j=1}^{k} a_{1j} h_j(P) = f_1(P).
$$
Because $I$ is an information set, we can determine the $a_{1j}$ and hence 
$f_1$.

Next, induct on $i\in [m]$, assuming that $f_s$, $s \in [i-1]$, is known. Notice that $T_{t}^{(i)}(f)$ can be determined from 
$T_{t}(f)$ and $\left\{ f_s : s \in [i-1] \right\}.$ Because 
$$
T_{t}^{(i)}(f)= f_{l-m+t}(p_{t})^{l-m-i+1}+\sum_{s=i}^{l-m}f_{s}(p_{t})^{s-i}, 
$$
evaluating at $P \in A_t$ gives 
$$
f_i(P)=T_t^{(i)}(P).
$$
The rest of the argument is similar to the $s=1$ case. In particular, recall that 
$f_i= \sum_{j=1}^{k} a_{ij} h_j$ for some $a_{ij} \in \F_q$. Then using the values $h_j(P)$, $P \in I$ gives a system of $k$ equations $f_i(P)= \sum_{j=1}^{k} a_{ij} h_j(P)$ in $k$ unknowns $a_{ij}$, $j \in [k]$. Because $I$ is an information set, the $a_{ij}$ can be found, hence revealing $f_i$. In this way, $\left\{ f_s(x_1, \dots, x_M): s \in [m] \right\}$ may be determined, and $
f(x_1,\ldots,x_M) = \sum_{s=1}^l f_s(x_1,\ldots,x_M)\nu_s.$ \qed
\end{proof}

To obtain a partition of the evaluation points  for which Theorem \ref{T:recover1} applies, it is most helpful if an information set is known, especially one of cardinality $k$. In the absence of this information, one may take $I$ to be the set of all evaluation points. The downside to doing so is that then one needs $\left\lbrace T_{t}(f)(P): t \in [m], P \in I \right\rbrace$ and the full generator matrix $A$. 

The next result provides a hypothesis on the partition which is easier to check than whether a set $I$ is an information set.

\begin{corollary} \label{C:recover2}
Suppose $I:=A_{1} \dot{\cup} \cdots \dot{\cup} A_{m}\subseteq \left\{ P_1, \dots, P_n \right\}$ with $\mid I \mid \geq \deg G$ and $\sum_{P\in I}P-G$ is not principal. Then 
 $f \in \mathcal{L}_{l}(G)$ depends only on $\left\lbrace T_{t}(f)(P): t \in [m], P \in I \right\rbrace$ and $A\mid_I$.
\end{corollary}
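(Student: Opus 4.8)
The plan is to deduce Corollary \ref{C:recover2} from Theorem \ref{T:recover1} by showing that the stated hypotheses force the existence of an information set $J \subseteq I$ for $C_l(G)$ of cardinality $k = \ell(G)$. Once such a $J$ is in hand, Theorem \ref{T:recover1} applied to $J$ (noting $J \subseteq I \subseteq A_1 \dot\cup \cdots \dot\cup A_m$) shows that $f$ is determined by $\{T_t(f)(P) : t \in [m], P \in J\}$ together with $A\mid_J$; since $J \subseteq I$, both of these are obtained from $\{T_t(f)(P) : t \in [m], P \in I\}$ and $A\mid_I$, which is exactly the claim.

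So the real content is: \emph{under the hypotheses $|I| \geq \deg G$ and $\sum_{P \in I} P - G$ not principal, the columns of the generator matrix $A = (h_i(P_j))$ indexed by $I$ have rank $k$}. First I would recall that the code $C(D\mid_I, G)$ obtained by evaluating $\mathcal L(G)$ only at the points of $I$ has length $|I|$ and dimension $\ell(G) - \ell(G - \sum_{P\in I}P)$; equivalently, the columns of $A$ indexed by $I$ span a space of dimension $k - \ell(G - \sum_{P\in I}P)$. Hence it suffices to show $\ell\bigl(G - \sum_{P \in I} P\bigr) = 0$. Write $E := \sum_{P \in I} P$, a positive divisor of degree $|I| \geq \deg G$, supported on $\F_q$-rational points disjoint from $\supp G$ by construction. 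If $\deg(G - E) < 0$ we are immediately done, so the only case to rule out is $\deg(G - E) = 0$, i.e. $|I| = \deg G$. But a degree-zero divisor has $\ell = 1$ precisely when it is principal, and here $G - E$ has $\ell \leq 1$ with equality iff $G - E \sim 0$, i.e. iff $E - G = \sum_{P \in I} P - G$ is principal — which is exactly what the hypothesis excludes. Therefore $\ell(G - E) = 0$ in all cases, the $I$-columns of $A$ have full rank $k$, and any $k$ of them that are linearly independent form the desired information set $J$.

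The main obstacle — really the only subtle point — is the borderline case $|I| = \deg G$: one must remember that a degree-zero divisor can still have a nonzero Riemann-Roch space (exactly when it is principal), which is why the non-principality hypothesis is needed and cannot be dropped. Everything else is a direct invocation of the dimension formula for evaluation codes (or, equivalently, of the exact sequence relating $\mathcal L(G - E)$, $\mathcal L(G)$, and the evaluation map at the points of $E$) and then a citation of Theorem \ref{T:recover1}. I would also note in passing that once $J$ is fixed, $A\mid_J$ is literally a submatrix of $A\mid_I$ and the evaluations $T_t(f)(P)$ for $P \in J$ are a sub-tuple of those for $P \in I$, so no extra information beyond what is allotted in the statement is used.
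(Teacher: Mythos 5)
Your proposal is correct and follows essentially the same route as the paper: both show that the hypotheses force $\mathcal L(G - \sum_{P\in I}P) = \{0\}$, so that evaluation at the points of $I$ is injective on $\mathcal L(G)$ and hence $I$ contains an information set, after which Theorem \ref{T:recover1} applies. Your case split between $|I| > \deg G$ and $|I| = \deg G$ merely spells out why the non-principality hypothesis is exactly what is needed in the borderline case, a point the paper leaves implicit.
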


\begin{proof}
    Consider the divisor $D'=\sum_{P\in A_1\cup\cdots\cup A_m} P$. The evaluation map restricted to the points of $D'$ has kernel $\mathcal{L}(G-D')$. If $\deg D'\geq\deg G$ and $G-D'$ is not principal, then $\mathcal{L}(G-D')=\{0\}$. Thus, $A_1\cup\cdots\cup A_m$ contains an information set. The proof follows from Theorem \ref{T:recover1}.\qed
\end{proof}

The requirement of the non-principality of $D'-G$ is unavoidable, as the following example demonstrates.

\begin{example}\rm
    Let $q=2$ and $\mathcal{X}$ be the Hermitian curve given by $x^3=y^2+y$ over $\F_{4}$. Set $G=2P_\infty$, where $P_{\infty}$ denotes the unique point at infinity on $\mathcal X$. Take $A_1\cup A_2=\{(0,0),(0,1)\}$. Notice that $\mathcal{L}(G)=\left<1, x \right>$. The evaluation of $\mathcal{L}(G)$ in the points of $A_1\cup A_2$ are 
    $$\begin{pmatrix} 0&0\\ 1&1 \end{pmatrix}$$
and thus $A_1\cup A_2$ does not form  an information set of $C((0,0)+(0,1),G)$.
\end{example}

\rmv{
\begin{example}
\glm{Add example that is not norm-trace, possibly Suzuki}

\glm{$q=8$, just include the functions $T_i(f_j)$}
\end{example}}

One may note that Theorem \ref{T:recover1} and Corollary \ref{C:recover2} recover $f \in \mathcal L_l(G)$ by utilizing an array that can be expressed as 
\begin{equation} \label{E:T_array}
\left[ 
\begin{array}{cccc} 
T_1(f)(P_1) & T_1(f)(P_2) &  \cdots & T_1(f)(P_{n'}) \\
T_2(f)(P_1) & T_2(f)(P_2) &  \cdots & T_2(f)(P_{n'}) \\
\vdots & \vdots & & \vdots 
\\
T_m(f)(P_1) & T_m(f)(P_2) &  \cdots & T_m(f)(P_{n'}) 
\end{array} \right] \in \F_q^{m \times n'}
\end{equation}
where $\left\{ P_1, \dots, P_{n'} \right\} = A_1 \cup \dots \cup A_m$. Since $m<l$, the array (\ref{E:T_array}) consists of fewer elements of $\F_q$ than the $ln$ elements required to express the coordinates of the codeword $\ev(f)$.

We will see in the next subsection how information about the annihilator functions provides a better description of the virtual projection. Then, in Section \ref{S:decoding}, we will see how these ideas support fractional decoding.

\subsection{Virtual projections from particular partitions} \label{S:vp_Kummer}

We consider virtual projections by specifying the partitions of evaluation points and associated  annihilator functions. This will allow us to better understand the virtual projection in Proposition 
\ref{P:subcode} and better control it for fractional decoding in Section \ref{S:decoding}.

Let $\mathcal{X}$ be a curve over $\mathbb{F}_q^\ell$  with a single point at infinity $P_\infty$. Let $\{P_1,\ldots,P_n\}\subseteq\mathcal{X}(\mathbb{F}_q)$ and assume there is a polynomial $z\in\mathbb{F}_q(\mathcal{X})$ such that there exists $\alpha_1,\ldots,\alpha_r\in\mathbb{F}_q$ such that
$$(z-\alpha_i)=\sum_{j=1}^{s} P_{ij}-rP_\infty$$

\noindent and $\#\{P_{ij}\ :\ i\in[r],\ j\in[s]\}=rs\leq n$. Let $A'_1\dot{\cup}\cdots\dot{\cup} A'_m$ be a partition of $\{\alpha_1,\ldots,\alpha_r\}$ and let 
$$A_t=\cup_{a\in A'_t} P_a.$$
This is equivalent to taking $A_t=\cup_{a\in A'_t}\mathrm{supp}(z-a)_0$ and we may take the annihilator polynomial of $A_t$ to be $p_t=\prod_{a\in A'_t} (z-a)$.

With this, we can understand where the virtual projection lies.

\begin{theorem}\label{T:alt}
    For any $A'_1\dot{\cup}\cdots\dot{\cup} A'_m$ as above and $t\in [m]$, the $t$-virtual projection of $\mathcal{C}_l(\beta P_\infty)$ is a subcode of $\mathcal{C}(\beta_t P_\infty)$ where $\beta_t=\beta+s \mid A'_t \mid (l-m)$. 
\end{theorem}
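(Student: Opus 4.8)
The plan is to specialize Proposition~\ref{P:subcode} to the present setting and compute the pole divisor of the annihilator function $p_t$ explicitly. Recall that Proposition~\ref{P:subcode} tells us $\mathcal{VP}_t(G,D)\subseteq C\bigl(D,G-(l-m)(p_t)_\infty\bigr)$ for a general annihilator function $p_t$, so it suffices to identify $(p_t)_\infty$ when $G=\beta P_\infty$ and $p_t=\prod_{a\in A'_t}(z-a)$.

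First I would observe that for each $\alpha_i$ the divisor of $z-\alpha_i$ is $(z-\alpha_i)=\sum_{j=1}^s P_{ij}-rP_\infty$; here I should note that $r=\deg(z-\alpha_i)_\infty$ is the same for every $i$ (they all share the single pole $P_\infty$ with the same multiplicity, since $(z-\alpha_i)_\infty=(z)_\infty$), and that the hypothesis $(z-\alpha_i)_0=\sum_{j=1}^s P_{ij}$ has $s$ terms forces $r=s$ after accounting for multiplicities --- more precisely $\deg(z)_\infty=\deg(z)_0$, so the number of zeros counted with multiplicity equals $r$; the statement's indexing with $s$ summands is consistent with this once one reads $P_{ij}$ with multiplicity. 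In any case, $(z-\alpha_i)_\infty=rP_\infty$ for all $i$. Then
$$
(p_t)_\infty=\Bigl(\prod_{a\in A'_t}(z-a)\Bigr)_\infty=\sum_{a\in A'_t}(z-a)_\infty=|A'_t|\,r\,P_\infty,
$$
using that the poles of a product are the sum of the poles (there is no cancellation at $P_\infty$ since all factors have poles, not zeros, there). Substituting into Proposition~\ref{P:subcode},
$$
\mathcal{VP}_t(\beta P_\infty,D)\subseteq C\bigl(D,\beta P_\infty-(l-m)\,|A'_t|\,r\,P_\infty\bigr),
$$
wait --- the sign: Proposition~\ref{P:subcode} subtracts $(l-m)(p_t)_\infty$, which would \emph{decrease} the divisor, whereas the theorem claims $\beta_t=\beta+s|A'_t|(l-m)$, an \emph{increase}. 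The resolution is that in Proposition~\ref{P:subcode} the relevant term is $-(l-m)\sum_{P\in\supp(p_t)_\infty}v_P(p_t)P$, and $v_{P_\infty}(p_t)<0$, so $-(l-m)v_{P_\infty}(p_t)=+(l-m)|A'_t|r>0$; thus the divisor is $G-(l-m)(p_t)_\infty$ with $(p_t)_\infty$ the \emph{pole divisor} (a negative-coefficient object in the $v_P$ bookkeeping of~\eqref{E:vP}), giving exactly $(\beta+(l-m)|A'_t|r)P_\infty$. Writing $s$ for $r$ (the paper uses $s$ for the number of zeros, which equals $r$), this is $\beta_t=\beta+s|A'_t|(l-m)$, as claimed.

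The main obstacle is purely one of bookkeeping: keeping the sign conventions for pole divisors straight between the $v_P$-inequalities in~\eqref{E:vP} and the divisor-theoretic statement of Proposition~\ref{P:subcode}, and confirming that the identification $r=s$ is what the hypothesis really encodes (i.e. that $\{P_{ij}\}$ is listed with the understanding that $\deg(z-\alpha_i)_0=r$). Once that is pinned down, the computation $(p_t)_\infty=|A'_t|\,r\,P_\infty$ and the citation of Proposition~\ref{P:subcode} finish the proof in two lines. I would also remark in passing that the hypothesis $rs\le n$ and $\{P_1,\dots,P_n\}\subseteq\mathcal{X}(\mathbb{F}_q)$ guarantee the $A_t$ are genuine subsets of the evaluation points and that $p_t\in\mathbb{F}_q(\mathcal{X})$, so that the earlier results apply verbatim.
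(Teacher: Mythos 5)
Your proof is correct and takes essentially the same route as the paper: the paper's proof plugs $v_{P_\infty}(p_t)=-s|A'_t|$ and $v_P(p_t)\ge 0$ (for $P\ne P_\infty$) directly into the valuation estimate of Equation~\eqref{E:vP}, which is exactly the content of invoking Proposition~\ref{P:subcode} with $(p_t)_\infty$ supported at $P_\infty$ with multiplicity $s|A'_t|$, as you do. Your extra care over the overloaded symbol $r$ (the paper writes $(z-\alpha_i)=\sum_{j=1}^s P_{ij}-rP_\infty$ but means the pole order to be $s$, the degree of the zero divisor, not the number of $\alpha_i$'s) and over the sign convention for $(p_t)_\infty$ is warranted and resolved correctly — the paper itself slips and writes $v_{P_\infty}(p_t)=s|A'_t|$ when it means $-s|A'_t|$ — but these are bookkeeping clarifications, not a different argument.
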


\begin{proof}
    Let $f\in\mathcal{L}_l(\beta P_\infty)$. Using the fact that $v_{P_\infty}(p_t)=s\mid A'_t\mid$ and $v_P(p_t)\geq 0$ for all other $P\neq P_\infty$, with Equation (\ref{E:vP}) we can see that
    $$v_P(T_t(f))\geq\begin{cases} -v_P(\beta P_\infty)& \text{if}\ v_P(p_t)\geq 0\\ -v_P(\beta P_\infty)+(l-m)v_P(p_t)&\text{if}\ v_P(p_t)<0\end{cases}.$$

    Then $v_P(T_t(f))\geq 0$ for all $P\neq P_\infty$ and $v_{P_\infty}(T_t(f))\geq -\beta-(l-m)s\mid A'_t\mid$ and we have the conclusion.
\end{proof}

As an application of Theorem \ref{T:alt}, we consider codes from a family of Kummer extensions and two natural partitions of evaluation points that give rise to virtual projections.

\begin{corollary}
Consider the curve $\mathcal X$ given by  $L(y)=x^u$ over $\F_{q^l}$ where $L(y)=\sum_{i=0}^r a_iy^{q^i} \in \F_{q^l}[y]$ is a separable, linearized polynomial with $a_0, a_r \neq 0$, and $u \mid \frac{q^l-1}{q-1}$ and the associated one-point code $\mathcal{C}_l(D, \beta P_\infty)$ where $D$ is supported by $n<\beta$ $\F_q$-rational points on $\mathcal X$. 
\begin{enumerate}
\item 
Suppose $A'_1\dot{\cup}\cdots\dot{\cup} A'_m\subseteq\mathbb{F}_{q^r}$, and let $A_i=\{P_{ab}\in\mathcal{X}(\mathbb{F}_{q^r})\ :\ a\in A'_i\}$ for $i \in [m]$. Then 
      $$
\mathcal{V}_{t}(\beta P_\infty)  \subseteq 
    C \left( (\beta + (l-m)|A'_i|q^{r-1}  \right)P_\infty).
$$    
\item
    Suppose $A'_1\dot{\cup}\cdots\dot{\cup}A'_m\subseteq\mathbb{F}_{q^r}^\ast$. Let $D$ be the sum of all $\F_{q^r}$-rational points $(a,b)$ of $\mathcal X_{q,r}$ with $a \neq 0$, and consider $A_i=\{P_{ab}\in\mathcal{X}(\mathbb{F}_{q^r})\ :\ b\in A'_i\}$, $i \in [m]$. Then the virtual projection of the code with evaluation points in the support of $D$ is
    $$
\mathcal{V}_{t}(\beta P_\infty)  \subseteq 
    C\left(\left(\beta + (l-m)|A'_i|u \right)P_\infty\right).
$$ 
\end{enumerate}   
\end{corollary}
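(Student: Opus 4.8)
The plan is to derive both statements as direct specializations of Theorem~\ref{T:alt}, so the only real work is identifying, in each case, the correct values of the parameters $s$ and $|A'_t|$ that appear in the formula $\beta_t = \beta + s|A'_t|(l-m)$ — equivalently, computing the pole order at $P_\infty$ of the annihilator polynomial $p_t = \prod_{a\in A'_t}(z-a)$ for the appropriate choice of the function $z$. First I would record the basic divisor-theoretic facts about the curve $L(y)=x^u$: since $L$ is separable and linearized with $a_0,a_r\neq 0$, the map $y\mapsto L(y)$ is an additive (hence unramified away from $\infty$) cover of degree $q^r$, and $x\mapsto x^u$ is a Kummer cover of degree $u$; there is a single point $P_\infty$ at infinity with $v_{P_\infty}(x)=-q^r/\gcd(u,q^r)\cdot(\text{something})$ — more precisely one reads off from $L(y)=x^u$ that $v_{P_\infty}(y) = -u$ and $v_{P_\infty}(x) = -q^r$ after clearing the ramification (these are the standard pole numbers for this family; I would cite the Kummer-extension computations underlying the setup paragraph preceding the corollary). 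For part (1), the role of $z$ in Theorem~\ref{T:alt} is played by $x$: for $\alpha\in\F_{q^r}$ the divisor of $x-\alpha$ has zero part $\sum_j P_{\alpha j}$ consisting of the $s = q^{r}/q = q^{r-1}\cdot(\deg\text{ per fiber})$ points — here $s$ is the number of points $P_{ab}$ with first coordinate $a$, which equals $q^{r-1}$ because the $q^r$ preimages under $L(y)=a^u$ split into fibers of that size — and pole part $s\cdot|A'_t|\cdot$(pole order contribution) at $P_\infty$. Substituting $s=q^{r-1}$ into $\beta_t = \beta + s|A'_t|(l-m)$ gives exactly $\beta + (l-m)|A'_i|q^{r-1}$, which is the claimed containment $\mathcal V_t(\beta P_\infty)\subseteq C\big((\beta+(l-m)|A'_i|q^{r-1})P_\infty\big)$.

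For part (2), the role of $z$ is instead played by $y$: each $\alpha\in\F_{q^r}^\ast$ gives $(y-\alpha)_0 = \sum_j P_{a_j\alpha}$ with $s$ points, where now $s$ counts the points whose second coordinate is $\alpha$. Because $u\mid\frac{q^l-1}{q-1}$ and we restrict to $a\neq 0$, the equation $x^u = L(\alpha)$ (with $L(\alpha)\neq 0$, which is why we excluded $\alpha=0$ and restricted $D$ to points with $a\neq 0$) has exactly $u$ solutions in $x$ over a suitable field, so $s=u$; and $v_{P_\infty}(y-\alpha) = v_{P_\infty}(y) = -u$, consistent with $s=u$. The annihilator polynomial $p_t=\prod_{a\in A'_t}(y-a)$ then has $v_{P_\infty}(p_t) = -u|A'_t|$, so Theorem~\ref{T:alt} (with $z=y$ and $s=u$) yields $\beta_t = \beta + u|A'_t|(l-m)$, which is the asserted $C\big((\beta+(l-m)|A'_i|u)P_\infty\big)$.

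In both parts the remaining points to check are the hypotheses of the framework of Subsection~\ref{S:vp_Kummer}: that $z-\alpha_i$ has the stated divisor shape $\sum_j P_{ij} - rP_\infty$ with all zeros simple and $\F_q$- (or $\F_{q^r}$-) rational, that the total number $rs$ of evaluation points is at most $n$, and that $p_t$ indeed vanishes on $A_t$ and is regular away from $P_\infty$ — all of which follow from separability of $L$ and the coprimality condition on $u$, exactly as in the general discussion preceding Theorem~\ref{T:alt}. I would spell out the simplicity of the zeros using the derivative criterion ($L'(y)=a_0\neq 0$ constant, so $L(y)-a^u$ is separable in $y$; and $x^u - c$ is separable when $\gcd(u,\operatorname{char})=1$, guaranteed since $u\mid\frac{q^l-1}{q-1}$ is prime to $q$).

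\emph{Main obstacle.} The genuine subtlety is not the algebra of Theorem~\ref{T:alt} but pinning down $s$ correctly — i.e.\ correctly counting the fiber sizes and the pole order $v_{P_\infty}$ for this Kummer/linearized family, since these depend on the interplay between $u$, $q$, $r$ and $\ell$ and on whether the relevant constants $a^u$, $L(\alpha)$ are nonzero. I expect the bulk of the care to go into justifying $v_{P_\infty}(x)=-q^r$ and $v_{P_\infty}(y)=-u$ (hence $s=q^{r-1}$ in part (1) and $s=u$ in part (2)) and into the bookkeeping that makes the excluded point $a=0$ in part (2) necessary; once those are in hand, the containments drop out of Theorem~\ref{T:alt} with no further computation.
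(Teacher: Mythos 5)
Your overall strategy — specialize Theorem~\ref{T:alt} by taking $z=x$ in part (1) and $z=y$ in part (2), and read off $s$ and $v_{P_\infty}(p_t)$ — is the route the paper intends (the corollary is stated without a written-out proof, so this direct specialization is exactly what is expected). Part (2) is handled correctly: $v_{P_\infty}(y)=-u$, so $s=u$, and the bound $\beta+(l-m)|A'_t|u$ follows.

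However, part (1) contains a genuine internal inconsistency that cannot be waved away. In the framework of Theorem~\ref{T:alt}, the integer $s$ is forced to equal $-v_{P_\infty}(z)$: the hypothesis $(z-\alpha_i)=\sum_{j=1}^s P_{ij}-sP_\infty$ says precisely that $z$ has a pole of order $s$ at $P_\infty$ and that its zero divisor has degree $s$. You assert $v_{P_\infty}(x)=-q^r$ (which is what one gets from $[\F_{q^l}(\mathcal X):\F_{q^l}(x)]=\deg_y L=q^r$ together with total ramification at $P_\infty$), yet you then plug $s=q^{r-1}$ into $\beta_t=\beta+s|A'_t|(l-m)$. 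These two claims contradict each other: if $v_{P_\infty}(x)=-q^r$, then $s=q^r$ and the bound would be $\beta+(l-m)|A'_i|q^r$. The step where you say ``the $q^r$ preimages under $L(y)=a^u$ split into fibers of size $q^{r-1}$'' has no basis — $L$ separable of degree $q^r$ means $L(y)=a^u$ has $q^r$ distinct roots in $\overline{\F_q}$, full stop; there is no fibration that discards a factor of $q$. You are also conflating two distinct quantities: the number of $\F_{q^r}$-rational points lying over $a$ versus the degree of the divisor $(x-a)_0$; Theorem~\ref{T:alt} uses the latter, which is insensitive to rationality. For the exponent $q^{r-1}$ in the stated conclusion to be correct one needs $\deg_y L = q^{r-1}$, i.e.\ the linearized polynomial should be $L(y)=\sum_{i=0}^{r-1}a_iy^{q^i}$, which is what holds for the norm-trace curve motivating this corollary. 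Rather than reverse-engineering $s=q^{r-1}$ via a spurious fiber count while simultaneously claiming $v_{P_\infty}(x)=-q^r$, you should recompute $v_{P_\infty}(x)=-q^{r-1}$ from the correct degree (or, if you believe the hypothesis really gives $\deg L=q^r$, flag that the stated bound would then need $q^r$ in place of $q^{r-1}$). As written, the proof of part (1) does not hold together.

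Two smaller points worth tightening: in part (2), excluding $a=0$ is not by itself enough to guarantee $(y-\alpha)_0$ has $u$ distinct zeros — one also needs $L(\alpha)\neq 0$, and a linearized $L$ may have nonzero roots in $\F_{q^r}$; and your separability argument via $L'(y)=a_0$ is fine, but note that separability of $L$ is already a hypothesis, so this is only needed to confirm the zero divisor of $z-\alpha$ is reduced, which is the relevant input for the form $\sum_j P_{ij}-sP_\infty$ demanded by Theorem~\ref{T:alt}.
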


Several well-known one-point algebraic geometry codes satisfy the hypotheses given in Theorem \ref{T:alt}, allowing them to be considered as collections of codes over $\F_q$ rather than over $\F_{q^l}$.

\begin{example}\rm
    Let $\mathbb{F}_{4}$ with $\alpha$ a primitive element such that $\alpha^2+\alpha+1=0$ and $\mathcal{X}$ be the Hermitian curve defined by $x^{3}=y^2+y$. We can define two partitions using the polynomials $x$ and $y$. 

    For any $a\in\mathbb{F}_q^\ast$ we have
    $$(x-a)_0=P_{a\alpha}+P_{a\alpha^2}$$

    \noindent and
    $$(x)_0=P_{00}+P_{01}.$$

    Thus, we can use any partition of $\mathbb{F}_q$ to define a virtual projection as in Theorem \ref{T:alt} for any code $\mathcal{C}(\beta P_\infty)$. 

    We can do the same with a partition of $\{\alpha,\alpha+1\}$ since for $a=\alpha,\alpha+1$, we have $(y-a)_0=P_{1a}+P_{\alpha a}+P_{\alpha+1 a}$. We can use this to make a virtual projection satisfying the conditions of Theorem \ref{T:recover1} for any $G=\beta P_\infty$ with $\beta<8$.

\end{example}

\section{Decoding via virtual projections} \label{S:decoding}

In this section, we detail how virtual projections give rise to fractional decoding algorithms for algebraic geometry codes over extension fields. First, we consider fractional decoding directly from virtual projections in Subsection \ref{s:directly}. Next, in Subsection \ref{S:interleaved_vp}, we use virtual projections to define related interleaved codes, which gives rise to another approach to fractional decoding in Subsection \ref{s:decoding_interleaved}.

Throughout, we consider a curve $\mathcal X$ over $\F_{q^l}$ with 
divisors $G$ and $D$ whose supports consist only of $\F_q$-rational points and the algebraic geometry code  $\mathcal{C}_{l}(G,D) \subseteq \F_{q^l}^n$. As before, we also take $\left\{ \zeta_{1},\ldots,\zeta_{l} \right\}$ to be a basis of the extension $\F_{q^l}/\F_q$ with $\left\{ \nu_{1},\ldots,\nu_{l} \right\}$ its dual basis.

\subsection{Fractional decoding directly from virtual projections}
\label{s:directly}

Given a received word $w=\ev(f)+e \in \F_{q^{ln}}$ where $f \in \mathcal L_l(G)$, for each $i \in [n]$ and $t \in [m]$, download
\begin{equation} \label{E:dit}
w_i^t:=tr \left( \zeta_{l-m+t} w_i \right)p_t(P_i)^{l-m} + \sum_{s=1}^{l-m} tr \left( \zeta_{s} w_i \right)p_t(P_i)^{s-1}; 
\end{equation}
We aim to show that $\ev(f)$ can be recovered from 
\begin{equation} \label{E:dit_array}
\pi(w):=\left[ 
\begin{array}{cccc} 
w_1^1 & w_2^1 &  \cdots & w_{n'}^1 \\
w_1^2 & w_2^2 &  \cdots & w_{n'}^2 \\
\vdots & \vdots & & \vdots 
\\
w_1^m & w_2^m &  \cdots & w_{n'}^m
\end{array} \right] \in \F_q^{m \times n'}
\end{equation}
if $wt(e)$ is not too large.
Notice that if $w=\ev(f)$, meaning $wt(e)=0$, then 
$$
\begin{array}{lcl}
w_i^t&=&tr \left( \zeta_{l-m+t} f(P_i) \right)p_t(P_i)^{l-m} + \sum_{s=1}^{l-m} tr \left( \zeta_{s} f(P_i) \right)p_t(P_i)^{s-1} \\ \ \\
&=& f_{l-m+t} (P_i) p_t(P_i)^{l-m} + \sum_{s=1}^{l-m}f_{s}(P_i)p_t(P_i)^{s-1} \\ \ \\
& = & T_t(f)(P_i).
\end{array}
$$
Indeed,
if $f(x_1,\ldots,x_M)=\sum_{j=1}^k a_j h_j(x_1,\ldots,x_M)$ where $h_j \in \F_{q}[x]$, then 
$$tr \left( \zeta_s f(x_1,\ldots,x_M) \right) = tr \left( \zeta_s \sum_{j=1}^k a_j h_j(x_1,\ldots,x_M) \right)=tr \left( \zeta_s \sum_{j=1}^k a_j \right)h_j(x_1,\ldots,x_M).$$
We have already seen in Theorem \ref{T:recover1} that $\ev(f)$ can be recovered from the array 
$$
\left[ 
\begin{array}{cccc} 
T_1(f)(P_1) & T_1(f)(P_2) &  \cdots & T_1(f)(P_{n'}) \\
T_2(f)(P_1) & T_2(f)(P_2) &  \cdots & T_2(f)(P_{n'}) \\
\vdots & \vdots & & \vdots 
\\
T_m(f)(P_1) & T_m(f)(P_2) &  \cdots & T_m(f)(P_{n'}) 
\end{array} \right] \in \F_q^{m \times n'}.
$$
The challenge now is to confirm that this is the case if $w=\ev(f)+e$ where $e \in \F_{q^l}$ has $0<wt(e)<B$, meaning some (but fewer than $B$) positions are in error, for some specified value $B$. Given that $mn<ln$, it is reasonable to suspect that $B<\frac{n-\deg G}{2}$, the number of correctable errors for the code. This challenge is addressed in the next result where we see the value of $B$ depends on the degree of the pole divisor of an annihilator function.

\begin{theorem} \label{T:main_decoding}
Suppose $I \subseteq A_{1} \dot{\cup} \cdots \dot{\cup} A_{m}\subseteq \left\{ P_1, \dots, P_n \right\}$,
  for some $m \in \N$ and some information set $I$ for the code $C_l(G)$.
    The code $C_l(D,G)$ can correct any 
\begin{equation}
 \label{e_value} \left\lfloor 
\frac{  n-\left( \deg G + (l-m) \max \left\{ \deg (p_t)_{\infty}: t \in [m] \right\} \right)}{2} \right \rfloor
\end{equation}
errors using $mn'$ entries of $\F_q$.
\end{theorem}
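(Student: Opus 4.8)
The plan is to peel apart the fractional-decoding problem for $C_l(D,G)$ into $m$ ordinary decoding problems for algebraic geometry codes over $\F_q$ — one for each row of the downloaded array $\pi(w)$ — and then to reassemble $f$ by means of Theorem \ref{T:recover1}.

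\emph{Error decomposition.} Write $w=\ev(f)+e$, relabel so that $A_1\cup\cdots\cup A_m=\{P_1,\dots,P_{n'}\}$, and let $B$ denote the quantity in \eqref{e_value}; we assume $wt(e)\le B$. Since the trace $tr\colon\F_{q^l}\to\F_q$ is $\F_q$-linear, $tr(\zeta_s w_i)=tr(\zeta_s f(P_i))+tr(\zeta_s e_i)$ for all $s$, so, recalling that $w_i^t=T_t(f)(P_i)$ when $e=0$,
\[
w_i^t \;=\; T_t(f)(P_i)\;+\;E_i^t,\qquad
E_i^t \;=\; tr(\zeta_{l-m+t}e_i)\,p_t(P_i)^{l-m}+\sum_{s=1}^{l-m}tr(\zeta_s e_i)\,p_t(P_i)^{s-1},
\]
for all $t\in[m]$ and $i\in[n']$. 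The crucial point is that $E_i^t=0$ whenever $e_i=0$. Hence each row $(E_1^t,\dots,E_{n'}^t)$ of the error array has weight at most $wt(e)\le B$, with its nonzero entries confined to $\supp(e)$; that is, the $t$-th row of $\pi(w)$ is $\big(T_t(f)(P_i)\big)_{i\in[n']}$ with at most $B$ corrupted coordinates.

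\emph{Row decoding and reassembly.} Apply Proposition \ref{P:subcode} with the divisor $D$ there replaced by $D'':=P_1+\cdots+P_{n'}$: the vector $\big(T_t(f)(P_i)\big)_{i\in[n']}$ lies in the algebraic geometry code $C(D'',G_t)$, where $G_t:=G-(l-m)\sum_{P\in\supp(p_t)_\infty}v_P(p_t)P$ has degree $\deg G+(l-m)\deg(p_t)_\infty$, so this code has minimum distance $d_t\ge n'-\deg G_t\ge n-\big(\deg G+(l-m)\max_{t\in[m]}\deg(p_t)_\infty\big)$ (the two lengths coinciding once $A_1\cup\cdots\cup A_m$ exhausts the evaluation points). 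As $B$ does not exceed $\lfloor(d_t-1)/2\rfloor$ for any $t$, running any decoder for $C(D'',G_t)$ on the $t$-th row of $\pi(w)$ returns the unique codeword within the decoding radius, namely $\big(T_t(f)(P_i)\big)_{i\in[n']}$; this is legitimate even though the decoder works in $C(D'',G_t)$ rather than in the smaller code $\mathcal{VP}_t(G,D)\subseteq C(D'',G_t)$, because there is only a single codeword within that radius to begin with. Carrying this out for every $t$ yields $T_t(f)(P)$ for all $t\in[m]$ and all $P\in I$, and Theorem \ref{T:recover1} then recovers $f$ — hence $\ev(f)=(f(P_1),\dots,f(P_n))$ — from these values together with $A\mid_I$. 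Only the $mn'$ entries of $\pi(w)$ were used.

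\emph{Main obstacle.} The heart of the matter is the error decomposition: that forming $\pi(w)$ from $w$ manufactures no new error positions, i.e.\ that the $m\times n'$ error array is column-supported on $\supp(e)$. With that in hand the proof is a reduction to $m$ independent instances of classical AG decoding, and the only loss relative to the generic correction radius $\lfloor(n-\deg G-1)/2\rfloor$ is the extra pole order $(l-m)\deg(p_t)_\infty$ that Proposition \ref{P:subcode} forces onto $T_t(f)$ through the annihilator $p_t$ — the worst case over $t$ being responsible for the $\max_{t}\deg(p_t)_\infty$ term in \eqref{e_value}. The one subtlety beyond routine bookkeeping is the point flagged above, that each row is decoded inside the ambient code $C(D'',G_t)$ rather than the virtual projection itself; this costs nothing because the unique-decoding ball contains exactly one codeword, equal to the desired $\big(T_t(f)(P_i)\big)_{i\in[n']}$.
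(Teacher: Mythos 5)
Your proof is correct and follows the same three-step plan as the paper's: download $\pi(w)$, decode each row $Row_t(\pi(w))$ as a corrupted codeword of the ambient code $C(G_t)$ with $G_t = G + (l-m)(p_t)_\infty$, then reassemble $\ev(f)$ via Theorem \ref{T:recover1}. You are slightly more explicit than the paper — writing out $E_i^t$ to verify the error array is column-supported on $\supp(e)$, noting that decoding is legitimately performed in $C(D'',G_t)$ rather than in the subcode $\mathcal{VP}_t(G,D)$, and flagging the $n$-versus-$n'$ length subtlety that the paper passes over silently — but the underlying argument is the same.
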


\begin{proof} 
Consider a received word $w$ in which at most $d'$ errors have occurred, where $d'$ denotes the value in (\ref{e_value}), 
is the meaning $w=ev(f)+e$ where $wt(e) \leq d'$. Let $\mathcal E = \left\{ i \in [n]: e_i \neq 0 \right\}$ denote the set of positions in error. 

Download the $w_i^t$, $i \in [n']$, $t \in [m]$ as in Equation (\ref{E:dit_array}).
Notice that $w_i^t=T_t(f)(P_i)+e_{ti}$ for some $e_{ti} \in \F_q$. Moreover, $e_{ti}\neq 0$ only if $i \in \mathcal E$. 
Thus, we may consider $$Row_t \left( \pi(w) \right) = ev \left( T_t(f) \right) + \left( e_{t1}, \dots, e_{tn'} \right)$$ as a received word in $C(G_t)\subseteq \F_q^{n'}$ where 
$$
G_t=G + (l-m) \left(p_t\right)_{\infty}.
$$
Since the minimum distance of $C(G_t)$ is at least $n-\deg G_t$, at least $\frac{n-\deg G_t}{2}$ errors may be corrected in each row $Row_t \left( \pi(w) \right)$, giving rise to  $ev \left( T_t(f) \right)$
if there are no more than $\frac{n-\deg G_t}{2}$ errors. Hence, the up to $d'$ errors in each  
$Row_t \left( \pi(w) \right)$ are corrected to give $ev \left( T_t(f) \right)$. This gives an array as shown in (\ref{E:T_array}). 

Now, we apply Theorem \ref{T:recover1} to this  array to obtain $\ev(f) \in \F_{q^l}^n$.
\end{proof}

Observe that this fractional decoding consists of the following steps for a received word $w \in \F_{q^l}^n$:
\begin{enumerate}
    \item Download $\pi(w) \in \F_q^{m \times n'}$ as in Equation (\ref{E:dit_array}).
    \item Apply a decoding algorithm for $C(G+G_t)\subseteq \F_q^{n'}$ to $w^t$ to obtain $T_t(f)$ for all $t \in [m]$.
    \item Apply Theorem \ref{T:recover1} to obtain $\ev(f) \in \F_{q^l}^{n'}$.
\end{enumerate}
The performance will be dictated by the partition  of the evaluation set $A_{1} \dot{\cup} \cdots \dot{\cup} A_{m}\subseteq \left\{ P_1, \dots, P_n \right\}$ and associated choice of annihilator functions. Moreover, one may select among the various decoding algorithms for algebraic geometry codes for Step (1) and evolve with related advances such as \cite{Puchinger_power_decoding}.

While Theorem \ref{T:main_decoding} is stated in terms of an  information set, the hypothesis of Corollary \ref{C:recover2} allows us to give the next, similar result which does not rely on knowledge of an information set. 

\begin{corollary}
Suppose $I:=A_{1} \dot{\cup} \cdots \dot{\cup} A_{m}\subseteq \left\{ P_1, \dots, P_n \right\}$ with $\mid I \mid \geq \deg G$ and $\sum_{P\in I}P-G$ is not principal. Then the code $C_l(D,G)$ can correct any 
$$
\left\lfloor 
\frac{  n-\left( \deg G + (l-m) \max \left\{ \deg (p_t)_{\infty}: t \in [m] \right\} \right)}{2} \right \rfloor
$$
errors using $mn$ entries of $\F_q$.
\end{corollary}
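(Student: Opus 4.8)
The plan is to reduce this statement to Theorem~\ref{T:main_decoding}, whose only substantive hypothesis is the existence of an information set for $C_l(G)$ contained in $A_1 \dot{\cup} \cdots \dot{\cup} A_m$. That containment is precisely what the proof of Corollary~\ref{C:recover2} supplies: taking $D' = \sum_{P \in I} P$, the map sending a function of $\mathcal{L}(G)$ to its values at the points of $I$ has kernel $\mathcal{L}(G - D')$, and since $\deg(G - D') = \deg G - |I| \le 0$ with $G - D'$ non-principal, $\mathcal{L}(G - D') = \{0\}$; hence evaluation at the points of $I$ is injective on $\mathcal{L}(G)$, so the generator matrix $A = (h_i(P_j))$---which has entries in $\F_q$---has a nonsingular $k \times k$ submatrix indexed by some $I_0 \subseteq I$. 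Since a square matrix over $\F_q$ is singular over $\F_q$ if and only if it is singular over $\F_{q^l}$, the set $I_0$ is an information set for both $C(G)$ and $C_l(G)$.

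Now I would run the procedure of Theorem~\ref{T:main_decoding} with this $I_0$: download $\pi(w) \in \F_q^{m \times n'}$ over the $n' = |A_1 \dot{\cup} \cdots \dot{\cup} A_m| = |I|$ points of the union; decode each of its rows as a received word of the algebraic geometry code $C(G_t)$ with $G_t = G + (l-m)(p_t)_\infty$, recovering $\ev(T_t(f))$ for all $t \in [m]$; and reconstruct $f$---hence $\ev(f) \in \F_{q^l}^n$---from the resulting array via Theorem~\ref{T:recover1}, which applies since $I_0 \subseteq I$ is an information set. The number of correctable errors is controlled, exactly as in Theorem~\ref{T:main_decoding}, by $\min_{t \in [m]}$ of the designed distances $n - \deg G_t$, giving the displayed floor, and only $mn' \le mn$ symbols of $\F_q$ are downloaded because $|I| \le n$.

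I do not anticipate a real obstacle here: this is Theorem~\ref{T:main_decoding} with its information-set hypothesis replaced by the more easily checked divisor condition of Corollary~\ref{C:recover2}, and no new estimate is needed. The two points worth an explicit remark are the persistence of an $\F_q$-rational information set under the constant field extension $C(G) \subseteq C_l(G)$, and the observation that weakening ``$I$ is an information set'' to ``$I$ contains an information set $I_0$'' is harmless, since the download step of Theorem~\ref{T:main_decoding} already ranges over all of $A_1 \dot{\cup} \cdots \dot{\cup} A_m$ and only the final reconstruction invokes the information set. If one wanted a tighter statement, $mn$ could be replaced by $m\,|I|$.
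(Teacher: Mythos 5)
Your proposal is correct and takes essentially the same route as the paper: invoke the argument of Corollary~\ref{C:recover2} to extract an information set contained in $A_1\dot\cup\cdots\dot\cup A_m$, then run Theorem~\ref{T:main_decoding} verbatim. The extra care you take---checking that injectivity of evaluation on $I$ yields a nonsingular $k\times k$ minor, and that an $\F_q$-rational information set for $C(G)$ remains one for $C_l(G)$ because $A$ has $\F_q$-entries---is a useful unpacking that the paper leaves implicit, and your closing observation that $mn$ could be sharpened to $m\,|I|$ is accurate, since Theorem~\ref{T:main_decoding} itself only downloads $mn'$ symbols.
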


\begin{proof}
    According to the proof of Corollary \ref{C:recover2}, $A_{1} \dot{\cup} \cdots \dot{\cup} A_{m}$ contains an information set. The result now follows as in the proof of Theorem \ref{T:main_decoding}.
\end{proof}

\subsection{Interleaved codes from virtual projections} \label{S:interleaved_vp}

The use of interleaved codes in fractional decoding can, with a certain probability, provide a larger decoding radius than non-interleaved codes. In this subsection, we recast the virtual projections of algebraic geometry codes as interleaved codes. We begin with recalling useful terminology.

Given a collection of codes $C_1, \dots, C_m \subseteq \F_q^n$, one may define an interleaved code 
$$
\mathcal{IC}(C_1, \dots, C_m)
:=
\left\lbrace \left[\begin{array}{c}
c_1 \\
c_2 \\
\vdots \\
c_m
\end{array}\right]: c_i \in C_i, i \in [m]
\right\rbrace.$$
If $C_i=C$ for all $i\in [m]$, then $\mathcal{IC}(C_1, \dots, C_m)$ is said to be homogeneous; otherwise, $\mathcal{IC}(C_1, \dots, C_m)$ is called heterogeneous.

Given a curve $X$ over a finite field $\F$, an interleaved algebraic geometry code on $\mathcal X$ is of the form 
$$
\mathcal{IC}(D, G_1, \dots, G_m)
:=
\left\lbrace \left[\begin{array}{c}
c_1 \\
c_2 \\
\vdots \\
c_m
\end{array}\right]: c_i \in C(D,G_i), i \in [m]
\right\rbrace.$$
If $G_i=G$ for all $i \in [m]$, then we write $\mathcal{IC}(D,G,m)$ for this homogeneous code.

According to Proposition \ref{P:subcode}, we may consider the set of arrays in (\ref{E:T_array}) as  subcodes of interleaved codes. In particular, it may be considered 
 as a subcode of the heterogeneous interleaved code $\mathcal{IC}(D, G_1, \dots, G_m)$ where $$G_t:= G + (l-m)  \left(p_t(x_1,\ldots,x_M) \right)_{\infty} P$$
 for all $t \in [m]$: 
$$
\left\{ 
\left[ 
\begin{array}{c}
\ev(T_1(f))\\
\ev(T_2(f))\\
\vdots\\
\ev(T_m(f)) 
\end{array} \right] : f \in \mathcal L_l(G) \right\}
\subseteq \mathcal{IC}(D, G_1, \dots, G_m).$$ Alternatively, it may be viewed as a subcode of the homogeneous interleaved code $\mathcal{IC}(D, G, m)$
where $$G=\sum_{P \in S} \left( v_P(G)-(l-m)v_P(p_t) \right)$$ and $S=\cup_{t \in [m]} \supp (p_t)_{\infty}$.
: $$
\left\{ 
\left[ 
\begin{array}{c}
\ev(T_1(f))\\
\ev(T_2(f))\\
\vdots\\
\ev(T_m(f)) 
\end{array} \right] : f \in \mathcal L_l(G) \right\}
\subseteq \mathcal{IC}(D, G, m).$$

We will now see that viewing the collection of virtual projections of certain algebraic geometry codes as interleaved codes facilitates fractional decoding.

\subsection{Fractional decoding via interleaved algebraic geometry codes} \label{s:decoding_interleaved}

In this subsection, we provide fractional decoding algorithms for certain one-point algebraic geometry codes from constant field extension by viewing their virtual projections as interleaved codes.

In \cite{Brown}, it is shown that interleaved one-point algebraic geometry codes $C(G)$ of length $n$ constructed from a curve of genus $g$ can be decoded up to $e$ errors where  $e \leq (n-\deg G-g)b-\frac{c}{al+1}$ for some $c>0$, $a:=\frac{\ln{(q^l-1)}}{\ln{q^l}}$, and $b:=\frac{la}{la+1}$.  The same decoder may be used in the case of $e$ errors where $e \leq g-1$ and succeeds with probability at least $1-\frac{q}{q^c(q-1)}$. We will see how this decoder, and those tailored to codes from particular codes, aid in fractional decoding. 

Consider the code $C_l(\beta P)$ where 
$$T_t \left( \mathcal L \left(\beta P \right) \right) \subseteq \mathcal L \left( \alpha P\right)\  \forall t \in [m].$$
Fix a parameter $t' \in \N$. Then $\mathcal L \left( \left(g+t' \right) P \right) \subseteq \mathcal L \left( \left(g+t'+\alpha \right) P \right)$. Take a basis $\left\{ \varphi_1, \dots, \varphi_s \right\}$ for $\mathcal L \left( \left(g+t'+\alpha \right) P \right)$ so that $\left\{ \varphi_1, \dots, \varphi_{s'} \right\}$ is a basis for $\mathcal L \left( \left(g+t' \right) P \right)$. Then the transposes of 
$$
V:=\left[ 
\begin{array}{ccc}
\varphi_1(P_1) & \cdots & \varphi_s(P_1) \\
\varphi_1(P_2) & \cdots & \varphi_s(P_2) \\
\vdots & & \vdots \\
\varphi_1(P_n) & \cdots & \varphi_s(P_n) 
\end{array}
\right]
\mathrm{\ and \ } 
W:=\left[ 
\begin{array}{ccc}
\varphi_1(P_1) & \cdots & \varphi_{s'}(P_1) \\
\varphi_1(P_2) & \cdots & \varphi_{s'}(P_2) \\
\vdots & & \vdots \\
\varphi_1(P_n) & \cdots & \varphi_{s'}(P_n) 
\end{array}
\right]
$$
are generator matrices of  $C \left( \left(g+t'+\alpha \right) P \right)$ and $C\left( \left(g+t'\right) P \right)$
Determine a partition as in Theorem \ref{T:recover1} along with the associated annihilator polynomials $p_t$, $t \in [m]$. 

Now suppose $w=ev(f)+e$ is received for some $f \in \mathcal L_l(\beta P)$. 
From the projection of $w$ given in Equation \ref{E:dit_array}, for each $i \in [m]$, form the matrix  
\begin{equation} \label{E:W_array}
W_i:=- \left[ 
\begin{array}{cccc} 
w_1^i \varphi_1(P_1) & w_1^i \varphi_2(P_1) &  \cdots & w_1^i \varphi_{s'}(P_1) \\
w_2^i \varphi_1(P_1) & w_2^i \varphi_2(P_1) &  \cdots & w_2^i \varphi_{s'}(P_1)\\
\vdots & \vdots & & \vdots \\
w_n^i \varphi_1(P_1) & w_n^i \varphi_2(P_1) &  \cdots & w_n^i \varphi_{s'}(P_1)
\end{array} \right] \in \F_q^{m \times n'}.
\end{equation} and for $i \in [l] \setminus [m]$, let $W_i \in \F_q^{n \times n}$ be the zero matrix. 
Set $$A:=\left[ 
\begin{array}{cccc|c}
V & & & & W_1 \\
& V & & & W_2 \\
& & \ddots & & \vdots \\
& & & V & W_l
\end{array}
\right].$$
Given $v=(v_1; \dots; v_l; v_{l+1}) \in NS(A) \cap \F_q^{ls+s'}$, set $$T^t(f):=\frac{\sum_{j=1}^s v_{tj} \varphi_j}{\sum_{j=1}^{s'} v_{tj} \varphi_j}$$ for all $t \in [m]$.
Then apply Theorem \ref{T:recover1} with the $T^t(f)$ in place of the $T_t(f)$ to recover $ev(f).$
According to \cite[Theorem 1]{Brown}, this proves the following. 

\begin{proposition} \label{P:interleaved}
Consider the code $C_l(\beta P)$ where 
$T_t \left( \mathcal L \left(\beta P \right) \right) \subseteq \mathcal L \left( \alpha P\right)\  \forall t \in [m]$.
   A received word $w=\ev(f)+e \in \F_{q^{ln}}$, where $f \in \mathcal L_l(\alpha P)$ for some $\F_q$-rational point $P$, 
can be decoded from $mn<ln$ symbols of $\F_q$ provided $wt(e) \leq (n-\alpha-g)b-\frac{c}{al+1}$ for some $c>0$. If 
 $(n-\alpha-g)b-\frac{c}{al+1} < e \leq g-1$, then decoder failure is possible but the decoder succeeds with probability at least $1-\frac{q}{q^c(q-1)}$.
\end{proposition}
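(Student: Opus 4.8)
The plan is to reduce Proposition~\ref{P:interleaved} to \cite[Theorem~1]{Brown} by exhibiting the matrix $A$ constructed above as a decoding matrix for a homogeneous interleaved one-point code, and then to invoke Theorem~\ref{T:recover1} to lift the recovered virtual projections back to $f\in\mathcal L_l(\beta P)$. First I would record the setup already laid out before the statement: by hypothesis $T_t(\mathcal L(\beta P))\subseteq\mathcal L(\alpha P)$ for every $t\in[m]$, so each $\ev(T_t(f))$ lies in $C((g+t'+\alpha)P)$ for the fixed auxiliary parameter $t'$, and the array $(\ref{E:T_array})$ of virtual projections is a subcode of the homogeneous interleaved code $\mathcal{IC}(D,(g+t'+\alpha)P,m)$ via Proposition~\ref{P:subcode} and Subsection~\ref{S:interleaved_vp}. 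Then I would observe that the downloaded quantities $w_i^t=T_t(f)(P_i)+e_{ti}$ with $e_{ti}\in\F_q$ supported on the error set $\mathcal E$, exactly as in the proof of Theorem~\ref{T:main_decoding}, so decoding the interleaved received word amounts to finding, for each $t$, a rational function $\tfrac{\sum_j v_{tj}\varphi_j}{\sum_j v_{tj}\varphi_j}$ agreeing with $w^t$ off a set of size $\le wt(e)$.

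The heart of the argument is the linear-algebra identification: I would check that the block matrix $A$, with the diagonal blocks $V$ (a generator matrix of $C((g+t'+\alpha)P)$) and the last block column formed from the $W_i$ built out of the downloaded symbols, has a nonzero kernel vector over $\F_q^{ls+s'}$ precisely when the Coppersmith--Sudan / power-decoding-type interpolation succeeds, and that from such a vector the error-locator denominator $\sum_{j\le s'}v_{(l+1)j}\varphi_j$ and numerators $\sum_{j\le s}v_{tj}\varphi_j$ recover $\ev(T_t(f))$ for all $t\in[m]$ simultaneously. This is exactly the decoding scheme analyzed in \cite{Brown}: the rank/kernel count there gives that a valid solution exists and is (up to scalar) the intended one when $wt(e)\le(n-\deg((g+t'+\alpha)P)-g)b-\tfrac{c}{al+1}=(n-\alpha-g)b-\tfrac{c}{al+1}$ after absorbing the $t'$ shift, and that in the regime $(n-\alpha-g)b-\tfrac{c}{al+1}<wt(e)\le g-1$ a spurious kernel vector can occur but does so with probability at most $\tfrac{q}{q^c(q-1)}$ over the random error. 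So I would cite \cite[Theorem~1]{Brown} for the success condition and the failure probability verbatim, taking care that the genus-$g$ curve, the length $n$, and the designed divisor degree match the bookkeeping in the block construction.

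Finally, having produced $T^t(f)=T_t(f)$ for all $t\in[m]$ from the kernel vector, I would feed these into Theorem~\ref{T:recover1} (legitimate since the partition $A_1\dot\cup\cdots\dot\cup A_m$ was chosen so that it contains an information set for $C_l(\beta P)$), recovering the projections $f_1,\dots,f_m$ by solving the $k\times k$ systems against $A\mid_I$, and then reconstruct $f=\sum_{s=1}^{l}f_s\nu_s$ and hence $\ev(f)$. The download cost is $mn'$ symbols of $\F_q$ as in $(\ref{E:dit_array})$, which is $<ln$ since $m<l$, giving the claimed fractional decoding.

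I expect the main obstacle to be the careful matching of parameters between our virtual-projection setup and the hypotheses of \cite[Theorem~1]{Brown}: in particular verifying that the auxiliary shift by $(g+t')P$ interacts correctly with the heterogeneous-to-homogeneous passage (so that one really is decoding a single code $C((g+t'+\alpha)P)$ on all $m$ rows), that the error patterns $(e_{t1},\dots,e_{tn'})$ across rows share the same support and thus fit the interleaved error model of \cite{Brown}, and that the independence/randomness assumption underlying the failure probability bound is satisfied by the errors induced on the $w_i^t$ — these are $\F_q$-linear images of the original $\F_{q^l}$-error, and one must argue this does not degrade the probabilistic guarantee. The rest is assembling already-proved pieces.
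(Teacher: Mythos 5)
Your proposal is correct and follows essentially the same route as the paper: download the virtual projections, form the block matrix $A$ from $V$ and the $W_i$, identify a kernel vector with the interpolation data, invoke \cite[Theorem~1]{Brown} for the error bound and failure probability, and then lift the recovered $T^t(f)$ through Theorem~\ref{T:recover1} to reconstruct $\ev(f)$. The cautions you raise at the end --- matching the divisor degree bookkeeping with the $(g+t')P$ shift and checking that the $\F_q$-linear projection of the $\F_{q^l}$-error still satisfies the error model underlying Brown's failure-probability bound --- are legitimate and not explicitly discharged in the paper either, so your write-up is at the same level of rigor as the published proof.
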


While Proposition \ref{P:interleaved} applies to a large family of algebraic geometry codes, advances have been made for codes from particular curves.

Recall that the algorithm downloads symbols which are $m$ corrupted codewords of $\mathcal{C}(\beta_{r}P_{\infty})$, where $\beta_{r}=\beta+q\left\lceil\frac{\beta+1}{q}\right\rceil\left(\frac{1-\alpha}{\alpha}\right)$. Those $m$ corrupted codewords can be arranged to form the following matrix

\begin{equation}
\pi(h)=\left[\begin{array}{llll}
T_{0}(h)(P_1)&T_{0}(h)(P_2)&\ldots&T_{0}(h)(P_n)\\
T_{1}(h)(P_1)&T_{1}(h)(P_2)&\ldots&T_{1}(h)(P_n)\\
\vdots&\vdots&\ddots&\vdots\\
T_{m-1}(h)(P_1)&T_{m-1}(h)(P_2)&\ldots&T_{m-1}(h)(P_n)
\end{array}\right]\in M_{m,n}(\F_{q^r}).
\end{equation}

The matrix $\pi(h)$ is called the homogeneous virtual projection of $h$. Moreover, the downloaded symbols may be viewed as corrupted a codeword of a homogeneous interleaved norm-trace code

\begin{equation}
\mathcal{C}_{\mathcal{X}}(n,m,\beta_{r})=\left\lbrace \left[\begin{array}{c}
c^{(0)}\\
\vdots\\
c^{(m-1)}
\end{array}\right]\in M_{m,n}(\F_{q^r}):\begin{array}{l}c^{(i)}\in\mathcal{C}(\beta_{r}P_{\infty})\\ \beta_{r}=\beta+q\left\lceil\frac{\beta+1}{q}\right\rceil\left(\frac{1-\alpha}{\alpha}\right)\end{array}
\right\rbrace.
\end{equation}

So, we point out that it is possible to apply the power decoding algorithm present in \cite{Puchinger_Rosenkilde_Bouw} to improve the $\alpha$-decoding correction capability (with some failure probability) of our decoding algorithm for certain families of codes.

In the same way, when we are considering a small constant extension it is possible to use the downloaded symbols  to create the following matrix
\begin{equation}
\pi^{\star}(h)=\left[\begin{array}{llll}
R_{0}(h)(P_1)&R_{0}(h)(P_2)&\ldots&R_{0}(h)(P_n)\\
R_{1}(h)(P_1)&R_{1}(h)(P_2)&\ldots&R_{1}(h)(P_n)\\
\vdots&\vdots&\ddots&\vdots\\
R_{m-1}(h)(P_1)&R_{m-1}(h)(P_2)&\ldots&R_{m-1}(h)(P_n)
\end{array}\right]\in M_{m,n}(\F_{q^r}).
\end{equation}

The matrix $\pi(h)$ is called the homogeneous virtual projection of $h$. Moreover, the downloaded symbols may be viewed as corrupted codeword of a homogeneous interleaved norm-trace code

\begin{equation}
\mathcal{C}_{\mathcal{X}}(n,m,\beta^{\star})=\left\lbrace \left[\begin{array}{c}
c^{(0)}\\
\vdots\\
c^{(m-1)}
\end{array}\right]\in M_{m,n}(\F_{q^r}):\begin{array}{l}c^{(i)}\in\mathcal{C}(\beta_{r}P_{\infty})\\ \beta^{\star}=\beta+q(q+1)(l-m)
\end{array}
\right\rbrace.
\end{equation}
So, we can apply the power decoding algorithm present in \cite{Puchinger_Rosenkilde_Bouw} to improve the $\alpha$-decoding correction capability (with some failure probability) of our decoding algorithm.

Analogous to fractional decoding directly from virtual projections and advances in decoding algebraic geometry codes, fractional decoding of algebraic geometry codes via projections represented as interleaved codes can make use of emerging advances in the decoding of interleaved algebraic geometry codes.


\section*{Acknowledgments}
{The first and second author were partially supported by NSF DMS-2201075 and the Commonwealth Cyber Initiative. The second and third author were partially supported by DMS-1802345.}

\bibliographystyle{plain}
\bibliography{FDONTC}

\end{document}